\newcolumntype{L}{>{$}l<{$}} 
\newcolumntype{C}{>{$}c<{$}} 
\newtheorem{assumption}{Assumption}
\newtheorem{problem}{Problem}
\newcommand{\R}{\mathbb{R}}
\newcommand{\bmat}[1]{\begin{bmatrix}#1\end{bmatrix}}
\newcommand{\smat}[1]{\left[\begin{smallmatrix} #1 \end{smallmatrix} \right]}
\newtheorem{lemma}{Lemma}
\newtheorem{theorem}{Theorem}
\newtheorem{definition}{Definition}
\title{\LARGE \bf 
Synthesis of Stabilizing Recurrent Equilibrium Network Controllers}
\author{
     Neelay Junnarkar, He Yin, Fangda Gu, Murat Arcak, Peter Seiler 
\thanks{Funded in part by the Air Force Office of Scientific Research grant FA9550-21-1-0288, and the Office of Naval Research grant N00014-18-1-2209.}
\thanks{N. Junnarkar, H. Yin, F. Gu, and M. Arcak are with the University of California, Berkeley {\tt\small \{neelay.junnarkar, he\_yin, gfd18, arcak\}@berkeley.edu.}}
\thanks{P. Seiler is with the University of Michigan,  Ann Arbor {\tt\small pseiler@umich.edu.}}
}
\begin{document}

\maketitle
\thispagestyle{empty}
\pagestyle{empty}


\begin{abstract}

We propose a parameterization of a nonlinear dynamic controller based on the 
recurrent equilibrium network, a generalization of the recurrent neural network. 
We derive constraints on the parameterization under which the controller 
guarantees exponential stability of a partially observed dynamical system with 
sector bounded nonlinearities. 
Finally, we present a method to synthesize this controller using projected policy
gradient methods to maximize a reward function with arbitrary structure. 
The projection step involves the solution of convex optimization problems. 
We demonstrate the proposed method with simulated examples of controlling a nonlinear inverted pendulum and a plant modeled with neural networks.

\end{abstract}

\section{Introduction}

Neural networks (NNs) in control tasks have seen recent success, particularly 
through reinforcement learning, which allows for application to control problems 
with complex reward functions or unknown dynamics  \cite{sutton2018reinforcement}.
However, many reinforcement learning methods focus on achieving desirable criteria
such as stability by encoding these goals into a reward function which the NN is 
trained to maximize.
This does not enforce that a trained controller will guarantee the desired 
properties.
In safety critical systems, it is preferable to apply properties such as stability
as a constraint. 
Work in \cite{kretchmar2001} develops methods for provably stable NN controllers 
with reinforcement learning methods; \cite{anderson2007robust} uses recurrent 
neural network (RNN) controllers for guaranteed stabilization of partially 
observed linear systems. Reference \cite{gu2021recurrent} extends 
\cite{anderson2007robust} to uncertain nonlinear systems, and develops convex inner-approximations to the set of stabilizing controller parameters.

Recent work in \cite{elghaoui2020} and \cite{bai2019deep} develop an expressive 
class of NNs in which the output of a layer is defined implicitly. 
The recurrent equilibrium network (REN), a model which leverages these implicit 
properties, is presented in \cite{revay2021ren}. 
References \cite{wang2021yoularen} and \cite{wang2021LearningOA} 
develop REN-based state- and output-feedback controllers with guarantees of stabilizing 
linear systems.

We consider the problem of designing a controller that maximizes a reward function with arbitrary structure while guaranteeing stability of a partially-observed system with sector-bounded nonlinearities. 
While stability is guaranteed, the reward function can be used to promote other 
desirable properties, e.g., by acting as a soft constraint on the control action 
magnitude \cite{wang2021LearningOA}. 
Partially-observed systems are common in practice due to sensing limitations 
\cite{braziunas2003pomdp} and estimates of the full state often require historical
output data \cite{callier1991linear}.
Sector bounded nonlinearities can describe numerous kinds 
of nonlinear functions, e.g., activation functions 
\cite{anderson2007robust,  fazlyab2020safety,pauli2021training}, saturation 
functions~\cite{hindi1998analysis}, and model uncertainties \cite{buch2021robust}.

\subsubsection{Contributions}

In this paper we use a model based on the REN to create a class of nonlinear 
dynamic controllers.
We derive a Lyapunov condition on the parameters of the REN-based controller under which 
the controller guarantees exponential stability of the origin of a partially observed dynamical 
system with sector bounded nonlinearities.
Based on the techniques of loop transformation, and change of variables adapted from
\cite{scherer1997multiobjective}, we derive a convex approximation to the set of 
stabilizing parameters. 
In the case that the plant is a linear system, this convexification procedure is 
lossless, i.e. the convex stability condition is equivalent to the non-convex one. 
We demonstrate a method to synthesize this controller using gradient-based 
reinforcement learning algorithms combined with a convex projection step.
The effectiveness of this controller parameterization is demonstrated on a 
nonlinear inverted pendulum model.
We also include an example of stabilizing a plant where the plant model has been learned as an implicit NN.

Compared with \cite{anderson2007robust, gu2021recurrent}, this paper extends the 
framework from explicit RNN controllers to RENs, enriching the description of 
controllers. 
Unlike the convex inner approximation presented in \cite{gu2021recurrent}, the convex stability condition presented here for linear plants is also lossless.
Additionally, in contrast to the Youla parameterization method in 
\cite{wang2021yoularen, wang2021LearningOA}, this paper does not require a ``base'' output feedback 
linear controller, and allows for nonlinear plants, including NN modeled plants.
An example with plants modeled as NNs is included at \url{https://arxiv.org/abs/2204.00122}.

\subsubsection{Notation}    
$\mathbb{S}^n_{++}$ denotes the set of $n$-by-$n$ 
symmetric, positive definite matrices. 
$\mathbb{D}^n_{+}$, $\mathbb{D}^n_{++}$ denote the set of $n$-by-$n$ diagonal 
positive semidefinite, and diagonal positive definite matrices. 
The notation $\|\cdot\|: \R^n \rightarrow \R$ denotes the standard 2-norm. 
The notation $x^*$ indicates the conjugate transpose of the vector $x$.
We define $\ell_{2e}^n$ to be the set of all one-sided sequences $x: \mathbb{N} 
\rightarrow \R^n$. 
When applied to vectors, the orders $>, \leq$ are interpreted elementwise.

\section{Partially Observed Linear Systems} \label{sec:LTI}
\subsection{Problem Formulation}
Consider the feedback system consisting of a plant $G$ and a controller $\pi_\theta$ which must stabilize the origin. 
To streamline the presentation, we first consider a partially observed linear, time-invariant (LTI) system~$G$ defined by the following discrete-time model:
\begin{subequations}\label{eq:nomi_G}
\begin{align}
x(k+1) &= A_G \ x(k) + B_G \ u(k) \\
y(k) &= C_G \ x(k)
\end{align}
\end{subequations}
where $x(k) \in \R^{n_G}$ is the state, $u(k) \in \R^{n_u}$ is the control input,  $y(k) \in \R^{n_y}$ is the output, $A_G \in \R^{n_G \times n_G}$, $B_G \in \R^{n_G \times n_u}$, and $C_G \in \R^{n_y \times n_G}$.  
\begin{assumption}
We assume that $(A_G, B_G)$ is stabilizable, and $(A_G, C_G)$ is detectable \cite{callier1991linear}.
\end{assumption}

\begin{assumption}\label{ass:known_dyn}
We assume $A_G, B_G,$ and $C_G$ are known.
\end{assumption}

\begin{problem}
Our goal is to synthesize a dynamic controller $\pi_\theta$ that maps the observation $y$ to an action $u$ to both stabilize $G$ and maximize some reward $R = \sum_{k=0}^T r_k(x(k), u(k))$ over finite horizon $T$.
\end{problem}

The single step reward $r_k(x(k), u(k))$ is assumed to be unknown, i.e. we only require access to an oracle which can evaluate \(r_k(x(k), u(k))\), and may be highly complex to capture a variety of desired closed-loop properties. 
For example, a safety violation at step $l$ can be encoded by setting the single step rewards $r_k(x(k), u(k)) = 0$ for \(k \geq l\).
This type of reward cannot be represented by the more common (and more analytically tractable) negative quadratic reward function.

\subsection{Controller Parameterization} \label{sec:controller}
Although the plant considered in the section is linear, the optimal controller for a general non-quadratic reward function can be nonlinear. 
We consider a nonlinear dynamic controller based on the recurrent equilibrium network (REN) \cite{wang2021yoularen}, which makes a class of high-capacity flexible controllers. 

The controller \(\pi_\theta\) is modeled as an interconnection of an LTI system $P_\pi$ and activation functions $\phi: \R^{n_\phi} \rightarrow \R^{n_\phi}$ as shown in Fig.~\ref{fig:RNN}. 
This parameterization is expressive, and contains many widely used model structures. 
The controller $\pi_\theta$ is defined as follows 
\begin{align} 
&P_\pi \left\{\begin{array}{ll}
\xi(k+1) &= A_K \ \xi(k) \ + B_{K1} \ w(k) + B_{K2} \ y(k) \\
u(k) &= C_{K1} \ \xi(k) + D_{K1} \ w(k) + D_{K2}  \ y(k) \\
v(k) &= C_{K2} \ \xi(k) + D_{K3} \ w(k) + D_{K4} \ y(k) \end{array}\right. \notag \\
&\hspace{0.87cm} w(k) \hspace{0.85cm} = \phi(v(k)) \label{eq:PK_def}
\end{align}
where $\xi \in \R^{n_\xi}$ is the hidden state, $v, w \in \R^{n_\phi}$ are the input and output of $\phi$, and matrices $A_K, \dots, D_{K4}$ are to be learned. 
Note that \(w(k)\) is defined implicitly as the solution of \(w(k) = \phi(C_{K2} \xi(k) + D_{K3}w(k) + D_{K4}y(k))\). 
This \(w(k)\) can be interpreted as the output of an infinite layer feed-forward neural network with activation function \(\phi\) and weights \(D_{K3}\) at each layer, with an input \(C_{K2}\xi(k) + D_{K4}y(k)\) being fed in at each layer \cite{bai2019deep}. 
When \(D_{K3} = 0\), \(w(k)\) can be computed explicitly as a function of \(v(k)\) and this parameterization reduces to the RNN parameterization presented in \cite{gu2021recurrent}.

\begin{figure}
  \centering
  \includegraphics[width=0.7\linewidth]{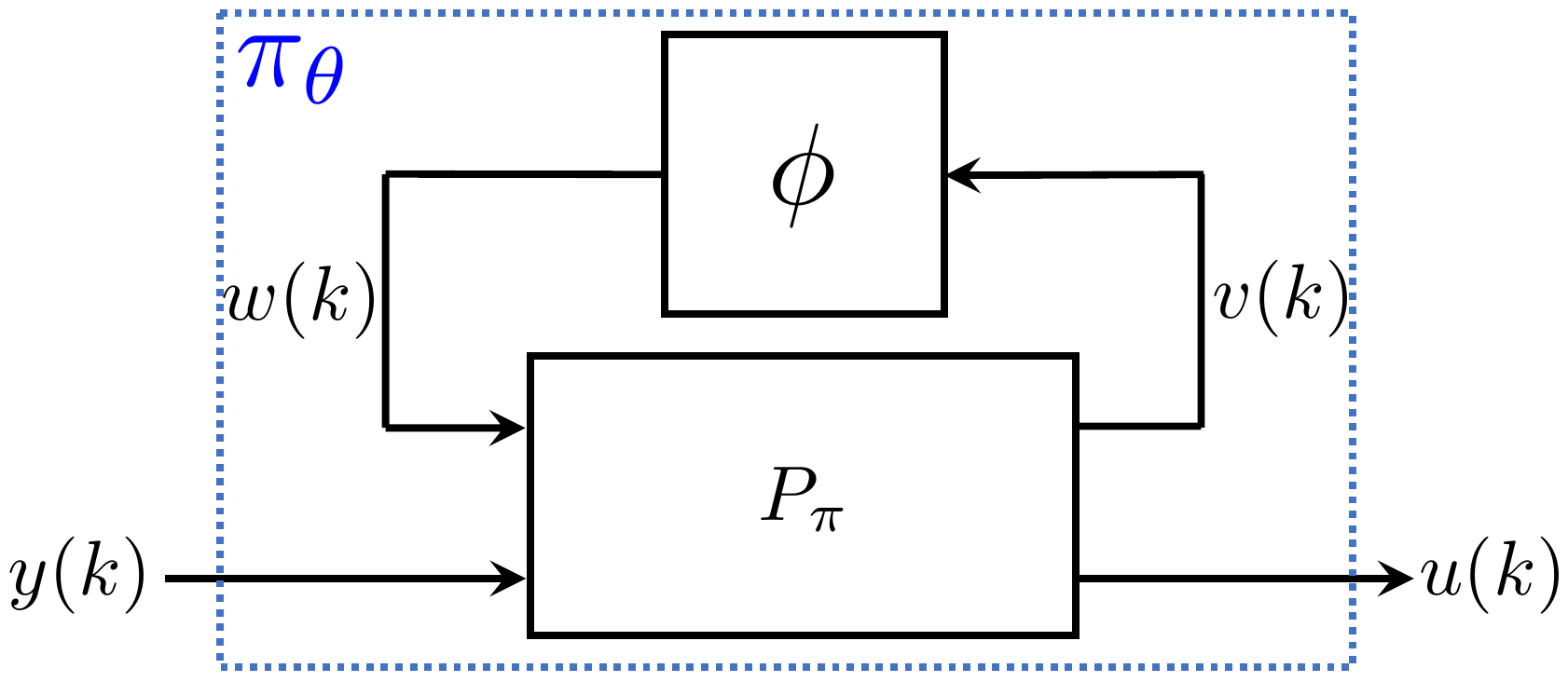}
  \caption{Controller \(\pi_\theta\) as an interconnection of $P_\pi$ and $\phi$.}
  \label{fig:RNN}   
\end{figure}

Define $\theta = \smat{A_K & B_{K1} & B_{K2} \\ C_{K1} & D_{K1} & D_{K2} \\ C_{K2} & D_{K3} & D_{K4}}$ as the collection of the learnable parameters of $\pi_\theta$. 
Let the initial condition of $\xi$  be zero: $\xi(0) = 0_{n_\xi \times 1}$. 
The combined nonlinearity $\phi$ is applied element-wise, i.e., $\phi := [\varphi_1(v_1), ..., \varphi_{n_\phi}(v_{n_\phi})]^\top$, where $\varphi_i$ is the $i$-th scalar activation function. Let each \(\varphi_i\) be differentiable.
\begin{assumption}\label{ass:nonlinearity}
We assume that each scalar nonlinearity is sector bounded, so that for \(i = 1,\dots,n_\phi\), \(\forall x\in\mathbb{R},\ \alpha_{\phi,i}x \leq \varphi_i(x) \leq \beta_{\phi, i}x\).
\end{assumption}

\subsection{Quadratic Constraints for Activation Functions}
The stability condition relies on quadratic constraints (QCs) \cite{fazlyab2020safety, megretski1997system} to bound the activation function. 
The sector bound on the scalar nonlinearities can be expressed as a QC.
\begin{definition}
  \label{def:sector}
  Let $\alpha \le \beta$ be given. 
  The function $\varphi: \R \rightarrow \R$ lies in the sector $[\alpha,\beta]$ if:
  \begin{align}
    ( \varphi(\nu) - \alpha \nu ) \cdot
       (\beta \nu - \varphi(\nu)) \ge 0
    \,\,\, \forall \nu \in \R.
  \end{align}
\end{definition}

The interpretation of the sector $[\alpha,\beta]$ is that $\varphi$ lies between lines passing through the origin with slope $\alpha$ and $\beta$. 
Many activations are sector bounded: leaky ReLU is sector bounded in $[a, 1]$ with its parameter $a \in (0, 1)$; ReLU and $\tanh$ are sector bounded in $[0,1]$ (denoted as $\tanh \in $ sector $[0, 1]$). 
Fig.~\ref{fig:sector} illustrates the sector bound on \(\tanh\).
\begin{figure}[b]
    \centering
    \includegraphics[width=0.7\linewidth,trim={0 0 0 0.17cm},clip]{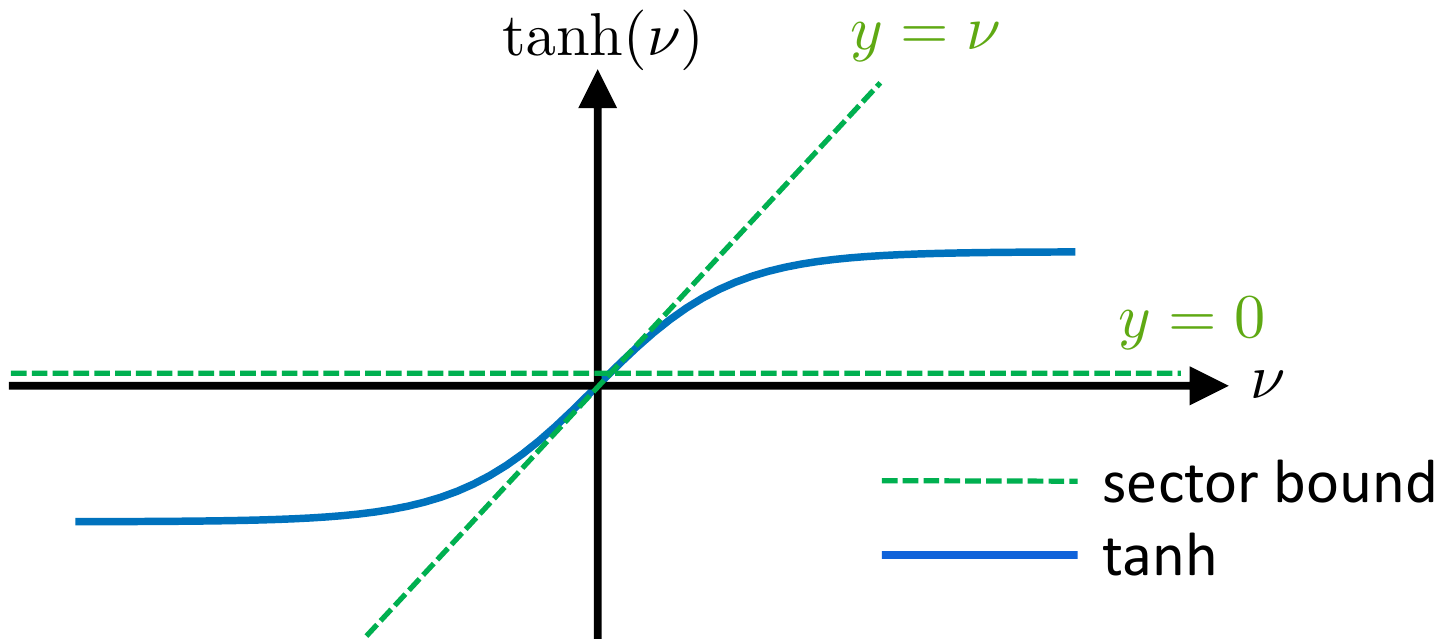}
    \caption{\(\tanh \in\) sector \([0, 1]\).}
    \label{fig:sector}
\end{figure}

Sector constraints can also be defined for combined activations $\phi$. 
From Assumption~\ref{ass:nonlinearity}, the \(i\)-th scalar activation function \(\varphi_i\) is sector bounded in \([\alpha_{\phi, i}, \beta_{\phi, i}]\).
These sectors can be stacked into vectors $\alpha_\phi, \beta_\phi \in \R^{n_\phi}$, where $\alpha_\phi=[\alpha_{\phi,1},...,\alpha_{\phi,n_\phi}]$ and $\beta_\phi=[\beta_{\phi,1}, ..., \beta_{\phi,n_\phi}]$, to provide QCs satisfied by  $\phi$.
\begin{lemma}
Let $\alpha_\phi, \beta_\phi \in \R^{n_\phi}$ be given with $\alpha_\phi \leq \beta_\phi$. 
Suppose that $\phi$ satisfies the sector bound $[\alpha_\phi, \beta_\phi]$ element-wise. 
For any $\Lambda \in \mathbb{D}_{+}^{n_\phi}$, and for all $v \in \R^{n_\phi}$ and $w = \phi(v)$, it holds that
\begin{align}
        \bmat{v \\ w}^\top \bmat{-2A_\phi B_\phi \Lambda & (A_\phi+B_\phi)\Lambda \\ (A_\phi+B_\phi)\Lambda & -2\Lambda} \bmat{v \\ w} \ge 0, \label{eq:origin_QC}
\end{align}
where $A_\phi = \mathrm{diag}(\alpha_\phi)$, and $B_\phi = \mathrm{diag}(\beta_\phi)$.
\end{lemma}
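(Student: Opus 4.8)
The plan is to exploit the fact that $\Lambda$, $A_\phi$, and $B_\phi$ are all diagonal, which lets the quadratic form in \eqref{eq:origin_QC} decouple into a sum of scalar terms, each of which is a nonnegative multiple of the scalar sector inequality from Definition~\ref{def:sector}.

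First I would expand the block product. Since $(A_\phi+B_\phi)\Lambda$ is diagonal and hence symmetric, carrying out the multiplication gives
\[
\bmat{v \\ w}^\top \bmat{-2A_\phi B_\phi \Lambda & (A_\phi+B_\phi)\Lambda \\ (A_\phi+B_\phi)\Lambda & -2\Lambda} \bmat{v \\ w} = -2 v^\top A_\phi B_\phi \Lambda v + 2 v^\top (A_\phi+B_\phi)\Lambda w - 2 w^\top \Lambda w.
\]
Writing $\Lambda = \mathrm{diag}(\lambda_1,\dots,\lambda_{n_\phi})$ with $\lambda_i \ge 0$, $A_\phi = \mathrm{diag}(\alpha_{\phi,i})$, $B_\phi = \mathrm{diag}(\beta_{\phi,i})$, and using that all factors are diagonal, the right-hand side splits componentwise into
\[
\sum_{i=1}^{n_\phi} 2\lambda_i \left( -\alpha_{\phi,i}\beta_{\phi,i}\, v_i^2 + (\alpha_{\phi,i}+\beta_{\phi,i})\, v_i w_i - w_i^2 \right).
\]

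Next I would recognize each summand as a scalar sector product. Since $w_i = \varphi_i(v_i)$, a direct expansion verifies the algebraic identity
\[
(\varphi_i(v_i) - \alpha_{\phi,i} v_i)(\beta_{\phi,i} v_i - \varphi_i(v_i)) = -\alpha_{\phi,i}\beta_{\phi,i}\, v_i^2 + (\alpha_{\phi,i}+\beta_{\phi,i})\, v_i w_i - w_i^2,
\]
so the quadratic form equals $\sum_{i=1}^{n_\phi} 2\lambda_i (\varphi_i(v_i) - \alpha_{\phi,i} v_i)(\beta_{\phi,i} v_i - \varphi_i(v_i))$. By Assumption~\ref{ass:nonlinearity} together with Definition~\ref{def:sector}, each product $(\varphi_i(v_i) - \alpha_{\phi,i} v_i)(\beta_{\phi,i} v_i - \varphi_i(v_i))$ is nonnegative, and $\lambda_i \ge 0$ because $\Lambda \in \mathbb{D}_+^{n_\phi}$; hence every term of the sum is nonnegative and \eqref{eq:origin_QC} follows.

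There is no substantive obstacle: the only care needed is in the block-matrix expansion and in tracking the placement of $\Lambda$ so that the cross term assembles symmetrically into $2v^\top(A_\phi+B_\phi)\Lambda w$. Reducing the expression to a weighted sum of the per-channel sector inequalities is what makes the sign immediate, and it is the cleanest route to the result.
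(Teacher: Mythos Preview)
Your proof is correct; the paper itself does not give an in-text proof of this lemma but simply cites \cite{fazlyab2020safety}, and your argument---decoupling the quadratic form via the diagonal structure of $A_\phi$, $B_\phi$, $\Lambda$ into a $\Lambda$-weighted sum of the per-channel sector products from Definition~\ref{def:sector}---is precisely the standard elementary proof one finds in that reference.
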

A proof is available in \cite{fazlyab2020safety}.


\subsection{Loop Transformation} \label{sec:loop-transform}
We first perform a loop transformation
to simplify the non-convex stability condition that serves as a basis for the convex stability condition.
This transformation yields a new representation of the controller $\pi_{\tilde{\theta}}$, which is equivalent to the one shown in Fig.~\ref{fig:RNN}.


Define \(S_\phi = \frac{1}{2}(A_\phi + B_\phi)\) and \(L_\phi = \frac{1}{2}(B_\phi - A_\phi)\).
Then the loop transformed $\tilde{\phi}$ is defined by \(\tilde{\phi}(v) = L_\phi^{-1}(\phi(v) - S_\phi v)\). 
This transforms \(\phi\), sector-bounded by \([\alpha_\phi, \beta_\phi]\), to \(\tilde{\phi}\), sector-bounded by $[-1_{n_\phi \times 1}, 1_{n_\phi \times 1}]$.
Thus, \(\tilde{\phi}\) satisfies the following simplified QC: for any $\Lambda \in \mathbb{D}_{+}^{n_\phi}$, it holds that
\begin{align}
    \bmat{v \\ z}^\top \bmat{\Lambda & 0 \\ 0 & -\Lambda} \bmat{v \\ z} \ge 0, \ \forall v \in \R^{n_\phi} \ \text{and} \ z = \tilde{\phi}(v). \label{eq:shifted_QC}
\end{align}

The transformed system $\tilde{P}_\pi$ is:
\begin{subequations}
\begin{align}
    \xi(k+1) &= \tilde{A}_K ~~ \xi(k) + \tilde{B}_{K1} \ z(k) + \tilde{B}_{K2} \ y(k) \\
u(k) &= \tilde{C}_{K1} \ \xi(k) + \tilde{D}_{K1} \ z(k) + \tilde{D}_{K2} \ y(k) \\
v(k) &= \tilde{C}_{K2} \ \xi(k) + \tilde{D}_{K3} \ z(k) + \tilde{D}_{K4} \ y(k) \\
z(k) &= \tilde{\phi}(v(k)) \label{eq:implicit-z-eq}
\end{align}\label{eq:tilde-controller}
\end{subequations}

where $M_{K3} = (I - S_\phi D_{K3})^{-1}$,
{\small
\begin{align}
\tilde{A}_K &= A_K + B_{K1} M_{K3} S_\phi C_{K2}, \ \tilde{B}_{K1} = B_{K1} M_{K3} L_\phi,\label{eq:1to1correspond} \\
\tilde{B}_{K2} &= B_{K2} + B_{K1} M_{K3} S_\phi D_{K4}, \nonumber\\
\tilde{C}_{K1} &= C_{K1} + D_{K1} M_{K3} S_\phi C_{K2}, \ \tilde{D}_{K1} = D_{K1} M_{K3} L_\phi \nonumber\\
\tilde{D}_{K2} &= D_{K2} + D_{K1} M_{K3} S_\phi D_{K4}, \nonumber \\
\tilde{C}_{K2} &= C_{K2} + D_{K3} M_{K3} S_\phi C_{K2}, \ \tilde{D}_{K3} = D_{K3} M_{K3} L_\phi \nonumber\\
\tilde{D}_{K4} &= D_{K4} + D_{K3} M_{K3} S_\phi D_{K4} . \nonumber 
\end{align}
}
The derivation of $\tilde{P}_\pi$ can be found in \cite{gu2021recurrent}. 
We define the learnable parameters of $\pi_{\tilde{\theta}}$ as $\tilde{\theta} = \smat{\tilde{A}_K & \tilde{B}_{K1} & \tilde{B}_{K2} \\ \tilde{C}_{K1} & \tilde{D}_{K1}& \tilde{D}_{K2} \\  \tilde{C}_{K2} & \tilde{D}_{K3} & \tilde{D}_{K4}}$.
Since there is a one-to-one correspondence \eqref{eq:1to1correspond} between the transformed parameters $\tilde{\theta}$ and the original parameters $\theta$, we learn and evaluate the controller in the reparameterized space.

\subsection{Convex Stability Condition}
The feedback system of plant $G$ and REN controller in $\pi_{\tilde{\theta}}$
is defined by the following equations
\begin{subequations}\label{eq:feedback_nominal}
\begin{align}
    \zeta(k+1) &= \mathcal{A} \ \zeta(k) + \mathcal{B} \ z(k)\\
v(k) &= \ \mathcal{C} \ \zeta(k) + \mathcal{D} \ z(k) \\
z(k) & = \tilde{\phi}(v(k)) \label{eq:implicit}
\end{align}
\end{subequations}
where $\zeta = [x^\top, \ \xi^\top] ^\top$ collects the states of $G$ and $\pi_{\tilde{\theta}}$ and
\begin{align*}
\mathcal{A} &= \smat{ A_G + B_G \tilde{D}_{K2}C_G & B_G \tilde{C}_{K1} \\ \tilde{B}_{K2}C_G & \tilde{A}_{K}}, \mathcal{B} = \smat{B_G \tilde{D}_{K1} \\ \tilde{B}_{K1}},\\
\mathcal{C} &= \smat{\tilde{D}_{K4}C_G & \tilde{C}_{K2}}, \hspace{1.58cm} \mathcal{D} = \tilde{D}_{K3}.
\end{align*}

The following lemma incorporates the QC for $\tilde{\phi}$ in the Lyapunov condition to derive the exponential stability condition of the origin using the S-Lemma \cite{boyd1994linear}.

\begin{lemma} \label{lemma:lin-plant-stability}
Consider the feedback system of plant G in \eqref{eq:nomi_G} and controller \(\pi_{\tilde{\theta}}\) in \eqref{eq:tilde-controller}, with the transformed nonlinearity \(\tilde{\phi}\) being sector bounded by \([-1, 1]\).
Assume that the implicit equation for \(z\) in the controller is well-posed, i.e., for any \(\zeta\) there 
exists a unique solution \(z\) to \eqref{eq:implicit}.
Given a rate \(\rho \in [0, 1)\) and parameters \(\tilde{\theta}\), if there exist matrices \(P \in \mathbb{S}_{++}^{n_G + n_\xi}\) and \(\Lambda \in \mathbb{D}_{++}^{n_\phi}\) such that the following condition holds
\begin{align}\label{eq:lyap_cond}
\begin{bmatrix}\mathcal{A}^\top P \mathcal{A}- \rho^2 P & \mathcal{A}^\top P \mathcal{B} \\ \mathcal{B}^\top P \mathcal{A} & \mathcal{B}^\top P \mathcal{B} \end{bmatrix}  +  [\star]^\top \begin{bmatrix}\Lambda & 0 \\ 0 & -\Lambda \end{bmatrix}\begin{bmatrix}\mathcal{C} & \mathcal{D} \\ 0 & I \end{bmatrix}  \prec 0,
\end{align}
where \(\star\) is inferred from symmetry, then for any \(x(0)\), we have \(\|x(k)\| \leq \sqrt{\mathrm{cond}(P)}\rho^k\|x(0)\|\) for all \(k > 0\) where \(\mathrm{cond}(P)\) is the condition number of \(P\) under the \(L^2\) norm.
The origin is exponentially stable with rate \(\rho\).
\end{lemma}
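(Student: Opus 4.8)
The plan is to run the discrete-time S-procedure behind \eqref{eq:lyap_cond} explicitly, using $V(\zeta) = \zeta^\top P \zeta$ as a Lyapunov function for the closed loop \eqref{eq:feedback_nominal} and showing that it contracts by the factor $\rho^2$ at every step. Concretely, I would left- and right-multiply the matrix inequality \eqref{eq:lyap_cond} by $\eta(k) := [\zeta(k)^\top,\ z(k)^\top]^\top$ and its transpose along an arbitrary trajectory, where $z(k) = \tilde\phi(v(k))$ is the unique solution of the implicit equation \eqref{eq:implicit} guaranteed by the well-posedness hypothesis (so the trajectory is well defined in the first place).

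Evaluating that quadratic form: the first block of \eqref{eq:lyap_cond} equals $\smat{\mathcal{A}^\top\\\mathcal{B}^\top}P\smat{\mathcal{A}&\mathcal{B}} - \smat{\rho^2 P&0\\0&0}$, so $\eta(k)^\top(\cdot)\eta(k)$ on that block gives $(\mathcal{A}\zeta(k)+\mathcal{B}z(k))^\top P(\mathcal{A}\zeta(k)+\mathcal{B}z(k)) - \rho^2\zeta(k)^\top P\zeta(k) = V(\zeta(k+1)) - \rho^2 V(\zeta(k))$ by the state update. The second term contributes $\smat{v(k)\\z(k)}^\top\smat{\Lambda&0\\0&-\Lambda}\smat{v(k)\\z(k)}$, since $\smat{\mathcal{C}&\mathcal{D}\\0&I}\eta(k) = \smat{v(k)\\z(k)}$ by $v(k)=\mathcal{C}\zeta(k)+\mathcal{D}z(k)$; by the loop-transformed sector QC \eqref{eq:shifted_QC} this term is nonnegative. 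Because \eqref{eq:lyap_cond} is negative definite, $\eta(k)^\top(\cdot)\eta(k)\le 0$, hence $V(\zeta(k+1)) - \rho^2 V(\zeta(k)) \le -\smat{v(k)\\z(k)}^\top\smat{\Lambda&0\\0&-\Lambda}\smat{v(k)\\z(k)} \le 0$, i.e.\ $V(\zeta(k+1))\le\rho^2 V(\zeta(k))$, and iterating from $k=0$ yields $V(\zeta(k))\le\rho^{2k}V(\zeta(0))$.

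To finish, I would convert this to a norm bound using $\lambda_{\min}(P)\|\zeta(k)\|^2 \le V(\zeta(k))$ and $V(\zeta(0))\le\lambda_{\max}(P)\|\zeta(0)\|^2$, giving $\|\zeta(k)\|\le\sqrt{\lambda_{\max}(P)/\lambda_{\min}(P)}\,\rho^k\|\zeta(0)\| = \sqrt{\mathrm{cond}(P)}\,\rho^k\|\zeta(0)\|$. Since $\zeta=[x^\top,\xi^\top]^\top$ with $\xi(0)=0$ we have $\|\zeta(0)\| = \|x(0)\|$, and $\|x(k)\|\le\|\zeta(k)\|$, so the stated estimate follows; the origin is an equilibrium because the sector $[-1,1]$ forces $\tilde\phi(0)=0$ and then well-posedness gives $\zeta(k)=0\Rightarrow\zeta(k+1)=0$, so it is exponentially stable with rate $\rho$. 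I do not expect a genuine obstacle here — the argument is essentially bookkeeping around the S-procedure — but the points to handle with care are that the strict inequality \eqref{eq:lyap_cond} only yields the \emph{non-strict} decrease $V(\zeta(k+1))\le\rho^2 V(\zeta(k))$ (the QC slack term carries a definite sign), that the well-posedness assumption is exactly what makes $z(k)$ and hence the closed-loop trajectory well defined, and that the zero initialization $\xi(0)=0$ is what lets us replace $\|\zeta(0)\|$ by $\|x(0)\|$ in the final bound.
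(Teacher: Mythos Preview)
Your proposal is correct and is precisely the standard S-procedure/Lyapunov argument that the paper has in mind: the paper's own proof is the single line ``The proof of stability is adapted from \cite{gu2021recurrent},'' and what you have written is exactly that adaptation spelled out. The bookkeeping points you flag (well-posedness to define the trajectory, the QC term absorbing the slack so only a non-strict decrease is obtained, and $\xi(0)=0$ to pass from $\|\zeta(0)\|$ to $\|x(0)\|$) are the right ones.
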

\begin{proof}
    The proof of stability is adapted from \cite{gu2021recurrent}.
\end{proof}

The main conditions in the lemma above are Assumption~\ref{ass:nonlinearity} (sector boundedness of the nonlinearity), well-posedness of the controller, and existence of \(P\) and \(\Lambda\) such that \eqref{eq:lyap_cond} holds. 
If we replace Assumption~\ref{ass:nonlinearity} with the stricter condition that each scalar nonlinearity \(\varphi_i\) is slope restricted and passes through the origin, i.e., such that \(\varphi_{i}^{\prime} \in [\alpha_{\phi, i}, \beta_{\phi, i}]\) and \(\varphi_i(0) = 0\), then \(\varphi_i\) is sector bounded by \([\alpha_{\phi, i}, \beta_{\phi, i}]\). 
Further, when combined with existence of \(P\) and \(\Lambda\) such that \eqref{eq:lyap_cond} holds, slope restriction of \(\varphi_i\) implies well-posedness of the controller. 
Therefore, in Lemma~\ref{lemma:lin-plant-stability}, Assumption~\ref{ass:nonlinearity} and well-posedness can be replaced by the condition that the scalar nonlinearities be slope restricted and pass through the origin.

\begin{lemma} \label{lemma:well-posedness}
    Suppose that for \(i = 1,\dots,n_\phi\) the scalar nonlinearity \(\varphi_i\) satisfies \(\varphi_i^\prime \in [\alpha_{\phi, i}, \beta_{\phi, i}]\) and \(\varphi_i(0) = 0\).
    Also assume that for given \(\rho \in [0, 1)\) and parameters \(\tilde{\theta}\), there exist matrices \(P \in \mathbb{S}_{++}^{n_G + n_\xi}\) and \(\Lambda \in \mathbb{D}_{++}^{n_\phi}\) such that \eqref{eq:lyap_cond} holds.
    Then for any \(\zeta\), there exists a unique solution \(z\) to \(z = \tilde{\phi}(C\zeta + Dz)\), i.e. the controller is well-posed.

\end{lemma}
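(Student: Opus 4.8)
The plan is to reduce well-posedness to the Banach fixed point theorem applied to the map $z \mapsto \tilde{\phi}(\mathcal{D}z + \mathcal{C}\zeta)$ on $\R^{n_\phi}$ equipped with a weighted Euclidean norm, where the weight is the diagonal matrix $\Lambda$ furnished by \eqref{eq:lyap_cond}. The two ingredients are a one-sided inequality on $\mathcal{D} = \tilde{D}_{K3}$ extracted from the Lyapunov LMI, and a global Lipschitz bound on $\tilde{\phi}$ coming from the slope restriction on $\varphi$.

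First I would extract the needed inequality from \eqref{eq:lyap_cond}. Expanding the second summand on the left-hand side as $[\star]^\top\,\mathrm{diag}(\Lambda,-\Lambda)\,[\star]$ and reading off its lower-right block shows that \eqref{eq:lyap_cond} implies $\mathcal{B}^\top P\mathcal{B} + \mathcal{D}^\top\Lambda\mathcal{D} - \Lambda \prec 0$, since the lower-right principal block of a negative definite matrix is itself negative definite. As $P \succ 0$ gives $\mathcal{B}^\top P\mathcal{B}\succeq 0$, we conclude $\Lambda - \mathcal{D}^\top\Lambda\mathcal{D}\succ 0$. Because $\Lambda$ and $\mathcal{D}$ are fixed finite-dimensional matrices, this strict inequality yields a constant $\epsilon\in(0,1)$ with $\mathcal{D}^\top\Lambda\mathcal{D} \preceq (1-\epsilon)\Lambda$; one may take $\epsilon = 1 - \lambda_{\max}\!\bigl(\Lambda^{-1/2}\mathcal{D}^\top\Lambda\mathcal{D}\,\Lambda^{-1/2}\bigr) > 0$.

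Next I would translate the hypothesis on the $\varphi_i$ into a Lipschitz bound on $\tilde{\phi}$. Differentiating the componentwise loop-transformation formula $\tilde{\varphi}_i(\nu) = \tfrac{2}{\beta_{\phi,i}-\alpha_{\phi,i}}\bigl(\varphi_i(\nu) - \tfrac{\alpha_{\phi,i}+\beta_{\phi,i}}{2}\nu\bigr)$ (well-defined since $\alpha_{\phi,i} < \beta_{\phi,i}$, as presumed in Section~\ref{sec:loop-transform}) and using $\varphi_i' \in [\alpha_{\phi,i},\beta_{\phi,i}]$ gives $\tilde{\varphi}_i' \in [-1,1]$, so each $\tilde{\varphi}_i$ is $1$-Lipschitz by the mean value theorem. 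Since $\tilde{\phi}$ acts elementwise and $\Lambda$ is diagonal with positive entries, $\|\tilde{\phi}(a)-\tilde{\phi}(b)\|_\Lambda^2 = \sum_i \Lambda_{ii}|\tilde{\varphi}_i(a_i)-\tilde{\varphi}_i(b_i)|^2 \le \sum_i \Lambda_{ii}|a_i-b_i|^2 = \|a-b\|_\Lambda^2$, where $\|x\|_\Lambda := (x^\top\Lambda x)^{1/2}$; that is, $\tilde{\phi}$ is nonexpansive in $\|\cdot\|_\Lambda$.

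Finally I would conclude. For fixed $\zeta$, the map $F(z) := \tilde{\phi}(\mathcal{D}z + \mathcal{C}\zeta)$ satisfies $\|F(z_1)-F(z_2)\|_\Lambda \le \|\mathcal{D}(z_1-z_2)\|_\Lambda = \bigl((z_1-z_2)^\top \mathcal{D}^\top\Lambda\mathcal{D}(z_1-z_2)\bigr)^{1/2} \le \sqrt{1-\epsilon}\,\|z_1-z_2\|_\Lambda$, so $F$ is a contraction on $\R^{n_\phi}$ with the complete norm $\|\cdot\|_\Lambda$; the Banach fixed point theorem then gives a unique fixed point, which is the unique solution $z$ to $z = \tilde{\phi}(\mathcal{C}\zeta + \mathcal{D}z)$, establishing well-posedness. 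The only mild subtlety I foresee is the bookkeeping in the first step — correctly identifying the lower-right block of \eqref{eq:lyap_cond} and that the matrix $\mathcal{D}=\tilde{D}_{K3}$ there is exactly the $D$ appearing in the implicit equation of the lemma — after which the argument is a routine contraction; note it is the sector/slope bound $[-1,1]$ (rather than, e.g., $[0,1]$) that makes the direct contraction work without resorting to a monotone-operator argument.
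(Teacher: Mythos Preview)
Your proposal is correct and follows essentially the same approach as the paper: extract $\mathcal{D}^\top\Lambda\mathcal{D}\prec\Lambda$ from the $(2,2)$ block of \eqref{eq:lyap_cond}, use the slope restriction to get that $\tilde{\phi}$ is nonexpansive in the $\Lambda$-weighted norm, and conclude via the Banach fixed point theorem. The only cosmetic difference is that the paper bounds the Jacobian $\|\tilde{\phi}'\mathcal{D}\|_\Lambda<1$ and invokes submultiplicativity, whereas you directly bound the Lipschitz constant of the composition; the two are equivalent here.
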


\begin{proof}
Define the vector norm \(\|x\|_\Lambda \triangleq \|\Lambda^{\frac{1}{2}}x\|_2\).
This is well-defined since \(\Lambda \in \mathbb{D}_{++}\).
This induces the following matrix norm:
\[\|\mathcal{D}\|_\Lambda \triangleq \sup_{x \neq 0} \frac{\|Dx\|_\Lambda}{\|x\|_\Lambda}.\]
From the (2, 2) block of \eqref{eq:lyap_cond} we infer that
\begin{align*}
    \mathcal{B}^\top P \mathcal{B} + \mathcal{D}^\top \Lambda \mathcal{D} - \Lambda \prec 0.
\end{align*}
Since \(P \succ 0\), we have \(\mathcal{D}^\top \Lambda \mathcal{D} - \Lambda \prec 0\).
Therefore, \(\|Dx\|_\Lambda^2 = x^* \mathcal{D}^\top \Lambda \mathcal{D} x < x^* \Lambda x = \|x\|_\Lambda^2\), implying \(\|\mathcal{D}\|_\Lambda <  1\).

For a fixed \(\zeta\), the Jacobian of \(\tilde{\phi}(\mathcal{D}z + \mathcal{C}\zeta)\) with respect to \(z\) is the matrix  \(\tilde{\phi}^\prime(\mathcal{D}z+\mathcal{C}\zeta)\mathcal{D}\). 
The derivative \(\tilde{\phi}^\prime\) is diagonal since \(\phi\) is assumed to operate element-wise.
Further, loop transformation ensures \(\tilde{\varphi}_i^\prime \in [-1, 1]\), so each diagonal entry of \(\tilde{\phi}^\prime\) is in \([-1, 1]\). 
Therefore, \(\tilde{\phi}^{\prime \top} \Lambda \tilde{\phi}^\prime \preceq \Lambda\), so \(\|\tilde{\phi}^\prime\|_\Lambda \preceq 1\).

Combining the two norm bounds, \( \|\tilde{\phi}^\prime(\mathcal{D}z+\mathcal{C}\zeta)\mathcal{D}\|_\Lambda \leq \|\tilde{\phi}^\prime(\mathcal{D}z+\mathcal{C}\zeta)\|_\Lambda \|\mathcal{D}\|_\Lambda < 1\). 
Since the norm of the Jacobian is less than 1, the mapping \(\tilde{\phi}(\mathcal{D}z + \mathcal{C}\zeta)\) is a contraction with respect to \(z\).
Therefore there exists a unique solution \(z(k)\) for each \(\zeta(k)\) in \eqref{eq:tilde-controller} \cite{sastry2013nonlinear}, i.e. \(\pi_{\tilde{\theta}}\) is well-posed.
\end{proof}

Note that matrices \(\mathcal{A}, \mathcal{B}, \mathcal{C}\), and \(\mathcal{D}\) depend on \(\tilde{\theta}\). 
Hence, \eqref{eq:lyap_cond} is not convex in \(\tilde{\theta}\), \(P\), and \(\Lambda\).
We now construct an equivalent stability condition that is convex so that projecting \(\tilde{\theta}\) to the set of stabilizing parameters can be solved with a convex optimization problem.

Rearranging \eqref{eq:lyap_cond} and using \(P \succ 0\) and \(\Lambda \succ 0\), the Schur complement gives the following equivalent condition:
\begin{equation}\label{eq:big_lyap_cond}
\bmat{ \rho^2 P & 0 & \mathcal{A}^\top & \mathcal{C}^\top \\
0 & \Lambda & \mathcal{B}^\top & \mathcal{D}^\top \\
\mathcal{A} & \mathcal{B} & P^{-1} & 0 \\
\mathcal{C} & \mathcal{D} & 0 & \Lambda^{-1}
} \succ 0.
\end{equation}
Multiply \eqref{eq:big_lyap_cond} on the left and right by $\mathrm{diag}(I, I, P, \Lambda)$, which is invertible, to get
\begin{equation} \label{eq:big_lyap_cond_multiply}
\bmat{ \rho^2 P & 0 & \mathcal{A}^\top P & \mathcal{C}^\top \Lambda \\
0 & \Lambda & \mathcal{B}^\top P & \mathcal{D}^\top \Lambda\\
P \mathcal{A} & P \mathcal{B} & P & 0 \\
\Lambda \mathcal{C} & \Lambda \mathcal{D} & 0 & \Lambda
} \succ 0.
\end{equation}

According to the partition of $\mathcal{A}$, we introduce the following notations for the sub-blocks of $P$ and its inverse:
\begin{equation}
    P = \bmat{X & U \\ U^\top & \hat{X}} ,  \ P^{-1} = \bmat{Y & V \\ V^\top & \hat{Y}}, \label{eq:P-partition}
\end{equation}
and thus $YX + V U^\top = I$, where \(X,Y \in \R^{n_G \times n_G}, U,V \in \R^{n_G \times n_\xi}\), and \(\hat{X}, \hat{Y} \in \R^{n_\xi, n_\xi}\). This partition was previously introduced by \cite{scherer1997multiobjective} in the LMI solution of an \(H_\infty\) synthesis problem.

Define a matrix $\mathcal{Y}$ as
\begin{equation} \label{eq:mathcaly-def}
    \mathcal{Y} \triangleq \bmat{Y & I \\ V^\top & 0}.
\end{equation}

Note this matrix has full column rank and is invertible when the state size of \(\pi_{\tilde{\theta}}\) equals the plant state size.
Multiply \eqref{eq:big_lyap_cond_multiply} on the left by $diag(\mathcal{Y}^\top, I, \mathcal{Y}^\top, I)$ and on the right by its transpose to get the the following condition:
\begin{equation} \label{eq:big_lyap_cond_multiply2}
\bmat{ \rho^2 \mathcal{Y}^\top P \mathcal{Y} & 0 & \mathcal{Y}^\top \mathcal{A}^\top P \mathcal{Y}& \mathcal{Y}^\top \mathcal{C}^\top \Lambda \\
0 & \Lambda & \mathcal{B}^\top P \mathcal{Y} & \mathcal{D}^\top \Lambda\\
\mathcal{Y}^\top P \mathcal{A} \mathcal{Y} & \mathcal{Y}^\top P \mathcal{B} & \mathcal{Y}^\top P \mathcal{Y} & 0 \\
\Lambda \mathcal{C} \mathcal{Y} & \Lambda \mathcal{D} & 0 & \Lambda
} \succ 0.
\end{equation}

While the terms in \eqref{eq:big_lyap_cond_multiply2} are not convex in \(\tilde{\theta}\), they can be expressed linearly in terms of a new set of decision variables. Specifically, define:
\begin{align*}
    N & \triangleq \smat{X A_G Y & 0 \\ 0 & 0}  + \smat{U & X B_G \\ 0 & I} \smat{\tilde{A}_K & \tilde{B}_{K2} \\ \tilde{C}_{K1} & \tilde{D}_{K2}} \smat{V^\top & 0 \\ C_G Y & I} \\
    &=\smat{N_{11} & N_{12} \\ N_{21} & N_{22}}, \\
    \hat{N}_{12} & \triangleq X B_G \tilde{D}_{K1} + U\tilde{B}_{K1}, \\
    \hat{N}_{21} & \triangleq \hat{D}_{K4} C_G Y + \Lambda \tilde{C}_{K2} V^\top, \\
    \hat{D}_{K4} & \triangleq \Lambda \tilde{D}_{K4}, \\
    \hat{D}_{K3} & \triangleq \Lambda \mathcal{D}.
\end{align*}

Then,
\begin{align*}
    \mathcal{Y}^\top P \mathcal{Y} & = \bmat{Y& I \\ I & X}, \\
    \mathcal{Y}^\top P \mathcal{AY} & = \bmat{
        A_G Y + B_G N_{21} & A_G + B_G N_{22} C_G \\
        N_{11} & X A_G + N_{12} C_G
    }, \\
    \mathcal{Y}^\top P\mathcal{B} & = \bmat{
        B_G \tilde{D}_{K1} \\
        \hat{N}_{12}
    }, \\
    \Lambda \mathcal{CY} & = \bmat{
        \hat{N}_{21} & \hat{D}_{K4} C_G
    }, \\
    \Lambda \mathcal{D} & = \hat{D}_{K3}.
\end{align*}

With this expansion, \eqref{eq:big_lyap_cond_multiply2} is linear, and therefore convex, in the decision variables $\hat{\theta}\triangleq$ ($X$, $Y$, $N$, $\Lambda$, $\hat{N}_{12}$, $\hat{N}_{21}$, $\tilde{D}_{K1}$, $\hat{D}_{K3}$, $\hat{D}_{K4}$).
Denote  \eqref{eq:big_lyap_cond_multiply2} as LMI$(\hat{\theta})$, and define the stabilizing set of parameters
\begin{align}
    \Theta = \{\hat{\theta}: \text{LMI$(\hat{\theta})$ \ holds}\}.
\end{align}

\begin{theorem} \label{thm:stability-lin}
Consider the feedback system of plant $G$ in \eqref{eq:nomi_G} and well-posed controller \(\pi_{\tilde{\theta}}\) in \eqref{eq:tilde-controller}.
Let a rate \(\rho \in [0, 1)\) be given.
If there exists \(\hat{\theta}\) such that \(\mathrm{LMI}(\hat{\theta})\) holds, then there exist parameters \(\tilde{\theta}\) for \(\pi_{\tilde{\theta}}\) such that origin is exponentially stable with rate \(\rho\).
\end{theorem}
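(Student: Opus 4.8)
The plan is to run in reverse the change of variables that produced $\mathrm{LMI}(\hat\theta)$: from a feasible $\hat\theta\in\Theta$ I would first reconstruct a Lyapunov matrix $P$, then the controller parameters $\tilde\theta$, then verify that \eqref{eq:lyap_cond} holds for this $(\tilde\theta,P,\Lambda)$, and finally invoke Lemma~\ref{lemma:lin-plant-stability}. Throughout I take the controller state dimension equal to the plant state dimension, $n_\xi=n_G$, so that (as noted after \eqref{eq:mathcaly-def}) the transformation matrices below are square.

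First, constructing $P$. Since $\mathrm{LMI}(\hat\theta)$, i.e. \eqref{eq:big_lyap_cond_multiply2}, is positive definite, so is its principal submatrix indexed by the third block row and column, which equals $\mathcal{Y}^\top P\mathcal{Y}=\smat{Y & I\\ I & X}$; hence $\smat{Y & I\\ I & X}\succ 0$, which gives $Y\succ 0$, $X\succ 0$, and, by a Schur complement, $X\succ Y^{-1}$, so $I-XY$ is invertible. I would then pick any invertible $U,V\in\R^{n_G\times n_G}$ with $UV^\top=I-XY$ (for instance $V=I$, $U=I-XY$), form $\mathcal{Y}=\smat{Y & I\\ V^\top & 0}$, which is invertible, and set $P=\smat{I & X\\ 0 & U^\top}\mathcal{Y}^{-1}$. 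Using $X=X^\top$, $Y=Y^\top$ one has $VU^\top=(UV^\top)^\top=I-YX$, from which $\mathcal{Y}^\top P\mathcal{Y}=\smat{Y & I\\ I & X}$; this is symmetric so $P=P^\top$, and it is positive definite, so $P\succ 0$ by invertibility of $\mathcal{Y}$. A short block computation then shows that $P$ has exactly the partition \eqref{eq:P-partition}, with the chosen $U,V$ as its off-diagonal blocks and with $YX+VU^\top=I$.

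Next, recovering $\tilde\theta$. Each defining relation listed before the statement can now be solved for a block of $\tilde\theta$, because the matrices involved are invertible: $\smat{U & XB_G\\ 0 & I}$ and $\smat{V^\top & 0\\ C_GY & I}$ are block triangular with invertible diagonal blocks, so the equation for $N$ yields a unique $\smat{\tilde A_K & \tilde B_{K2}\\ \tilde C_{K1} & \tilde D_{K2}}$; since $\Lambda\succ 0$, set $\tilde D_{K3}=\Lambda^{-1}\hat D_{K3}$ and $\tilde D_{K4}=\Lambda^{-1}\hat D_{K4}$; take $\tilde D_{K1}$ directly from $\hat\theta$; and put $\tilde B_{K1}=U^{-1}(\hat N_{12}-XB_G\tilde D_{K1})$ and $\tilde C_{K2}=\Lambda^{-1}(\hat N_{21}-\hat D_{K4}C_GY)V^{-\top}$. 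This determines every block of $\tilde\theta$.

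Finally, with this $\tilde\theta$, the constructed $P$, and the given $\Lambda$, the block expressions for $\mathcal{Y}^\top P\mathcal{Y}$, $\mathcal{Y}^\top P\mathcal{AY}$, $\mathcal{Y}^\top P\mathcal{B}$, $\Lambda\mathcal{CY}$, and $\Lambda\mathcal{D}$ all hold by construction, so \eqref{eq:big_lyap_cond_multiply2} holds. Reversing the invertible congruences ($\mathrm{diag}(\mathcal{Y}^{-\top},I,\mathcal{Y}^{-\top},I)$ on \eqref{eq:big_lyap_cond_multiply2} and $\mathrm{diag}(I,I,P^{-1},\Lambda^{-1})$ on \eqref{eq:big_lyap_cond_multiply}) recovers \eqref{eq:big_lyap_cond}, and the Schur complement recovers \eqref{eq:lyap_cond}. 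Lemma~\ref{lemma:lin-plant-stability} — with well-posedness supplied by hypothesis, or by Lemma~\ref{lemma:well-posedness} under the slope-restriction condition — then gives exponential stability of the origin at rate $\rho$. I expect the only real obstacle to be bookkeeping: confirming invertibility of each transformation matrix in the square-state case and checking that the reconstructed $\tilde\theta$ reproduces the five block expressions above. Conceptually this is the classical controller-reconstruction argument from the $H_\infty$ LMI synthesis of \cite{scherer1997multiobjective}, adapted to the additional nonlinearity channel.
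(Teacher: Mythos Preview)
Your proposal is correct and follows essentially the same route as the paper's proof: construct $U,V$ with $UV^\top=I-XY$ (the paper suggests $U=X$, $V=X^{-1}-Y$ or an SVD factorization, while you use $V=I$, $U=I-XY$, but any invertible choice works), build $P$ so that $\mathcal{Y}^\top P\mathcal{Y}=\smat{Y & I\\ I & X}$, invert the defining relations to recover $\tilde\theta$, and undo the invertible congruences plus Schur complement to land back at \eqref{eq:lyap_cond} and invoke Lemma~\ref{lemma:lin-plant-stability}. You supply a few details the paper leaves implicit (e.g., why $I-XY$ is nonsingular and why the reconstructed $P$ has the partition \eqref{eq:P-partition}), but the argument is the same classical Scherer-type reconstruction.
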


\begin{proof}
    Given \(\hat{\theta} \in \Theta\), construct \(U\) and \(V\) such that \(UV^\top = I - XY\), e.g. through SVD. 
    Assuming \(n_\xi = n_G\), one solution is \(U = X\) and \(V = X^{-1} - Y\). Construct \(P\) as 
    \begin{equation}
        \left(\mathcal{Y}^\top\right)^{-1} \bmat{Y & I \\ I & X} \mathcal{Y}^{-1}.
    \end{equation}
    Recover \(\tilde{\theta}\) as follows:
    \begin{subequations}\label{eq:recovery}
    \begin{align}
        \smat{\tilde{A}_K & \tilde{B}_{K2} \\ \tilde{C}_{K1} & \tilde{D}_{K2}} &= \smat{U & X B_G \\ 0 & I}^{-1} \left(N - \smat{X A_G Y & 0 \\ 0 & 0} \right) \smat{V^\top & 0 \\ C_G Y & I}^{-1}, \\
        \tilde{B}_{K1} &= U^{-1} (\hat{N}_{12} - X B_G \tilde{D}_{K1}), \\
        \tilde{C}_{K2} &= \Lambda^{-1} (\hat{N}_{21} - \hat{D}_{K4} C_G Y) \left(V^\top\right)^{-1}, \\
        \tilde{D}_{K3} &= \Lambda^{-1}\hat{D}_{K3}, \\
        \tilde{D}_{K4} &= \Lambda^{-1}\hat{D}_{K4}. 
    \end{align}
    \end{subequations}
    Since all congruence transformations were invertible, they can be undone to arrive at \eqref{eq:big_lyap_cond} from \(\mathrm{LMI}(\hat{\theta})\). 
    Since \eqref{eq:big_lyap_cond} is equivalent to the condition in Lemma \ref{lemma:lin-plant-stability} by Schur complement, then \(P\) and \(\Lambda\) certify that the recovered \(\tilde{\theta}\) exponentially stabilize the origin with rate \(\rho\).
\end{proof}

\subsection{Stabilizing Reinforcement Learning Problem}
Denote the relationship between $\hat{\theta}$ and $\tilde{\theta}$ provided in \eqref{eq:recovery} concisely as 
\begin{align}
    \tilde{\theta} = f(\hat{\theta}).
\end{align}
The stabilizing reinforcement learning problem is to synthesize a stabilizing controller, parameterized by \(\tilde{\theta}\), to maximize the reward \(R\). This can be written as the following constrained optimization problem:
\begin{subequations} \label{eq:stab_RL_prob}
\begin{align}
    &\max_{\hat{\theta}} R\left(\pi_{f(\hat{\theta})}\right) \\
    &\text{s.t. \ LMI$(\hat{\theta})$ holds}.
\end{align}
\end{subequations}

We use gradient-based methods to optimize \(\hat{\theta}\). Computing the gradient of the reward with respect to \(\hat{\theta}\) requires the existence of the matrix inverse in the following Jacobian: 
\begin{equation*}
    \frac{\partial z^*}{\partial (\cdot)} = (I - \tilde{\phi}^\prime \tilde{D}_{K3})^{-1}\tilde{\phi}^\prime 
    \frac{\partial (\tilde{C}_{K2} \ \xi + \tilde{D}_{K3} \ z^* + \tilde{D}_{K4} \ y)}{\partial (\cdot)},
\end{equation*}
where \(\tilde{\phi}^\prime\) is the Jacobian of \(\tilde{\phi}\) evaluated at \(\tilde{C}_{K2} \xi + \tilde{D}_{K3} z^* + \tilde{D}_{K4} y\), and \(z^* = \tilde{\phi}(\tilde{C}_{K2} \xi + \tilde{D}_{K3} z^* + \tilde{D}_{K4} y)\).
A sufficient condition for the existence of this inverse is for \(\tilde{\phi}\) to be slope restricted in [-1, 1].
Then, \(\|\tilde{\phi}^\prime(v)\|_\Lambda \leq 1\) and \(\|\tilde{D}_{K3}\|_\Lambda < 1\), so \(I - \tilde{\phi}^\prime(v) \tilde{D}_{K3}\) is invertible for any \(v\).

See Alg. \ref{alg:train_lin_controller} for the procedure to train the controller for a linear plant.
During training, after each gradient step, the new parameters \(\hat{\theta}\) are projected to the stabilizing set \(\Theta\) and parameters \(\tilde{\theta}\) are recovered as in \eqref{eq:recovery}.
This guarantees that the controller used at all points during training exponentially stabilizes the origin of the closed loop system. 

\begin{algorithm} 
\caption{Stabilizing RL algorithm for linear plants}
    \begin{algorithmic} 
    \State \(\hat{\theta} \gets \text{random in }\ \Theta\)
    \While{not converged}
        \State \(\hat{\theta}^\prime \gets \text{gradient step from}\ \hat{\theta}\)
        \State \(\hat{\theta} \gets \arg\min_{\hat{\theta}} \|\hat{\theta}-\hat{\theta}^\prime\|_F \ \ \text{s.t.} \ \ \mathrm{LMI}(\hat{\theta})\)
    \EndWhile
    \State \(\tilde{\theta} \gets f(\hat{\theta})\) \Comment{Recover \(\tilde{\theta}\)}
    \end{algorithmic} \label{alg:train_lin_controller}
\end{algorithm}

\section{Partially Observed Linear Systems with Sector-Bounded Nonlinearity} \label{sec:non-lin-LTI}

\subsection{Problem Formulation}

In this section, we consider a nonlinear plant \(F_u(G, \Delta)\) which can be described as an interconnection of an LTI system \(G\) and a static sector bounded nonlinearity \(\Delta : \ell_{2e}^{n_\Delta} \rightarrow \ell_{2e}^{n_\Delta}\).
Sector bounded nonlinearities can be used to represent various types of nonlinear functions, e.g., activation functions as described in the previous sections, saturation functions~\cite{hindi1998analysis}, and model uncertainties \cite{buch2021robust}, which in turn enables us to consider plants modeled as neural networks.

The plant $F_u(G,\Delta)$ is given by the following equations 
\begin{align} 
&G \left\{\begin{array}{ll}
x(k+1) &= A_G \ x(k) \ + B_{G1} \ q^\prime(k) + B_{G2} \ u(k) \\
y(k) &= C_{G1} \ x(k) \\
p(k) &= C_{G2} \ x(k) + D_{G3} \ q^\prime(k)
\end{array}\right. \notag \\
&\hspace{0.87cm} q^\prime(k) \hspace{0.85cm} = \Delta(p(k)) \label{eq:plant-nonlin}
\end{align}
with \(p(k) \in \R^{n_\Delta}\) being the input to the nonlinearity and \(q^\prime(k) \in \R^{n_\Delta}\) being the output.

We assume \(\Delta\) operates element-wise with the \(i\)-th component being sector bounded by \([\alpha_{\Delta,i}, \beta_{\Delta,i}]\). 
Assume \(D_{G3}\) is such that the plant is well-posed.

The procedure to arrive at a stability condition now parallels that for the LTI plant. We first loop transform the plant so that the transformed nonlinearity \(\tilde{\Delta}\) is sector bounded by \([-1_{n_\Delta \times 1}, 1_{n_\Delta \times 1}]\).
The transformed parameters are denoted by a tilde. Note that some parameters remain the same under loop transformation.
The controller \(\pi_\theta\) has the same model as in \eqref{eq:PK_def}, and undergoes loop transformation as before.
Then, the feedback system of \(\pi_\theta\) and \(F_u(G, \Delta)\) is of the form:
\begin{subequations}\label{eq:nonlin-feedback_nominal}
\begin{align}
    \zeta(k+1) &= \mathcal{A} \ \zeta(k) + \mathcal{B} \ t(k)\\
s(k) &= \ \mathcal{C} \ \zeta(k) + \mathcal{D} \ t(k) \\
t(k) & = \tilde{\psi}(s(k)) \label{eq:nonlin-implicit}
\end{align}
\end{subequations}
where
 $\zeta = [x^\top, \ \xi^\top] ^\top$ is the stacked state, 
\(\tilde{\psi} = [\tilde{\Delta}^\top, \tilde{\phi}^\top]^\top\) is the stacked nonlinearity, \(s(k) = [p(k)^\top, \ v(k)^\top]^\top\) is the stacked  nonlinearity input, and \(t(k) = [q(k)^\top, \ z(k)^\top]^\top\) is the stacked nonlinearity output. 
Since both \(\tilde{\phi}\) and \(\tilde{\Delta}\) are sector bounded by \([-1, 1]\), so is \(\tilde{\psi}\).

\subsection{Stability Condition}

The gathered system \eqref{eq:nonlin-feedback_nominal} satisfies the same properties as the gathered system \eqref{eq:feedback_nominal}.
Therefore the condition from Lemma \ref{lemma:lin-plant-stability} applies to the feedback system of the sector bounded plant and REN controller as well. 
Then, decomposing \(P\) and defining \(\mathcal{Y}\) as before, \eqref{eq:big_lyap_cond_multiply2} represents an equivalent stability condition in terms of \(\mathcal{A}, \mathcal{B}, \mathcal{C}, \mathcal{D}\) matrices from \eqref{eq:nonlin-feedback_nominal}.


Decompose \(\Lambda\) as \(\mathrm{diag}(\Lambda_\Delta, \Lambda_\phi)\) where \(\Lambda_\Delta \in \mathbb{D}^{n_\Delta}_{++}, \Lambda_\phi \in \mathbb{D}^{n_\phi}_{++}\).
Then \(N, \hat{N}_{12}, \hat{N}_{21}, \hat{D}_{K4}\), and \(\hat{D}_{K3}\) can be defined paralleling the definitions for the LTI plant, using \(\Lambda_\phi\) instead of \(\Lambda\).
Expanding the terms shows that, for a fixed \(\Lambda_\Delta\), LMI \eqref{eq:big_lyap_cond_multiply2} is convex in \(\hat{\theta} = (X, Y, N, \Lambda_\phi, \tilde{D}_{K1}, \hat{N}_{12}, \hat{N}_{21}, \hat{D}_{K4}, \hat{D}_{K3})\).
The stabilizing set of parameters is 
\begin{equation}
    \Theta = \left\{\hat{\theta} : \exists \Lambda_\Delta \in \mathbb{D}_{++}^{n_\Delta}\ \ \text{LMI$(\hat{\theta}, \Lambda_\Delta)$ \ holds}\right\}.
\end{equation}
For any \(\hat{\theta} \in \Theta\), a corresponding \(\tilde{\theta}\) can be recovered as before.

\begin{theorem}
Consider the feedback system of plant \(F_u(G, \Delta)\) in \eqref{eq:plant-nonlin} and well-posed controller \(\pi_{\tilde{\theta}}\) in \eqref{eq:tilde-controller}.
Let a rate \(\rho \in [0, 1)\) be given.
If there exist \(\hat{\theta}\) and \(\Lambda_\Delta \in \mathbb{D}_{++}^{n_\Delta}\) such that \(\mathrm{LMI}(\hat{\theta}, \Lambda_\Delta)\) holds, then there exist parameters \(\tilde{\theta}\) for \(\pi_{\tilde{\theta}}\) such that the origin is exponentially stable with rate \(\rho\).
\end{theorem}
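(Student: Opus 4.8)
The plan is to mirror the argument already carried out for the LTI plant in Theorem~\ref{thm:stability-lin}, since the section text has been arranged so that the nonlinear case reduces to it. First I would invoke the observation, stated just before the theorem, that the gathered feedback system \eqref{eq:nonlin-feedback_nominal} has exactly the same structural form as \eqref{eq:feedback_nominal}: a linear part $(\mathcal{A},\mathcal{B},\mathcal{C},\mathcal{D})$ closed with a static nonlinearity $\tilde{\psi}$ that, by loop transformation, is sector bounded in $[-1,1]$ element-wise. Hence the Lyapunov/S-lemma argument of Lemma~\ref{lemma:lin-plant-stability} applies verbatim (the stacked $\Lambda=\mathrm{diag}(\Lambda_\Delta,\Lambda_\phi)$ plays the role of the $\Lambda$ there), so condition \eqref{eq:lyap_cond} with these enlarged matrices certifies exponential stability of the origin with rate $\rho$, provided well-posedness holds — which is assumed in the statement.

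Next I would run the same chain of congruence transformations used in Section~\ref{sec:controller}: Schur-complement \eqref{eq:lyap_cond} into \eqref{eq:big_lyap_cond}, conjugate by $\mathrm{diag}(I,I,P,\Lambda)$ to obtain \eqref{eq:big_lyap_cond_multiply}, partition $P$ and $P^{-1}$ as in \eqref{eq:P-partition}, and conjugate by $\mathrm{diag}(\mathcal{Y}^\top,I,\mathcal{Y}^\top,I)$ to reach \eqref{eq:big_lyap_cond_multiply2}. The only bookkeeping difference is that $\Lambda$ is now block-diagonal; the block corresponding to $\tilde{\Delta}$, namely $\Lambda_\Delta$, multiplies plant data $(B_{G1},C_{G2},D_{G3})$ rather than controller data, so the resulting entries of \eqref{eq:big_lyap_cond_multiply2} are bilinear in $\Lambda_\Delta$ and the controller variables — hence the need to fix $\Lambda_\Delta$. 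With $\Lambda_\Delta$ fixed, I would introduce the change of variables $\hat{\theta}=(X,Y,N,\Lambda_\phi,\tilde{D}_{K1},\hat{N}_{12},\hat{N}_{21},\hat{D}_{K4},\hat{D}_{K3})$ defined exactly as in the LTI case but with $\Lambda_\phi$ in place of $\Lambda$, and verify that \eqref{eq:big_lyap_cond_multiply2} becomes $\mathrm{LMI}(\hat{\theta},\Lambda_\Delta)$, linear in $\hat{\theta}$.

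Finally, for the converse direction that the theorem actually asserts: given $\hat{\theta}$ and $\Lambda_\Delta$ with $\mathrm{LMI}(\hat{\theta},\Lambda_\Delta)$ satisfied, I would reconstruct $U,V$ from $UV^\top=I-XY$, assemble $P=(\mathcal{Y}^\top)^{-1}\smat{Y & I \\ I & X}\mathcal{Y}^{-1}\succ 0$ and $\Lambda=\mathrm{diag}(\Lambda_\Delta,\Lambda_\phi)\succ 0$, and recover $\tilde{\theta}$ through the same formulas \eqref{eq:recovery} (again replacing $\Lambda$ by $\Lambda_\phi$ in the rows that came from the $\phi$-channel). Because every transformation in the forward chain was a congruence by an invertible matrix, it can be undone, so $\mathrm{LMI}(\hat{\theta},\Lambda_\Delta)$ implies \eqref{eq:big_lyap_cond} for the recovered data, which by Schur complement is \eqref{eq:lyap_cond}, which by the nonlinear analogue of Lemma~\ref{lemma:lin-plant-stability} gives exponential stability with rate $\rho$.

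The main obstacle is purely organizational rather than mathematical: one must check that partitioning off $\Lambda_\Delta$ and keeping only $\Lambda_\phi$ in the change of variables genuinely renders $\mathrm{LMI}(\hat{\theta},\Lambda_\Delta)$ affine in $\hat{\theta}$ — i.e. that no surviving term is bilinear in $\Lambda_\phi$ and a controller variable, and that the $\Lambda_\Delta$-weighted plant blocks $\mathcal{Y}^\top\mathcal{C}^\top\Lambda$, $\Lambda\mathcal{C}\mathcal{Y}$, etc., expand into constants-plus-linear expressions once $\Lambda_\Delta$ is frozen. I expect this to go through exactly as in the cited $H_\infty$ construction of \cite{scherer1997multiobjective} and the RNN case of \cite{gu2021recurrent}, with the only genuinely new content being the block-diagonal split of $\Lambda$; consequently the proof can be stated compactly by pointing to the proof of Theorem~\ref{thm:stability-lin} and indicating these substitutions.
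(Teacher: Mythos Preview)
Your proposal is correct and follows exactly the approach the paper takes: the paper's proof consists of the single sentence ``The proof is the same as for Theorem~\ref{thm:stability-lin},'' and what you have written is precisely the expanded version of that remark, with the block-diagonal split $\Lambda=\mathrm{diag}(\Lambda_\Delta,\Lambda_\phi)$ and the substitution of $\Lambda_\phi$ for $\Lambda$ in \eqref{eq:recovery} being the only bookkeeping changes.
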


The proof is the same as for Theorem \ref{thm:stability-lin}. 

\subsection{Stabilizing Reinforcement Learning Problem}

The reinforcement learning procedure for the nonlinear plant, summarized in Alg.~\ref{alg:train-nonlin-controller}, follows closely as for the linear plant. 
However, since \(\mathrm{LMI}(\hat{\theta}, \Lambda_\Delta)\) is not convex in both \(\hat{\theta}\) and \(\Lambda_\Delta\), we split the projection step of \(\hat{\theta}^\prime\) onto the stabilizing set of parameters into two convex problems. 
The first problem fixes \(\Lambda_\Delta\) and projects \(\hat{\theta}^\prime\) onto \(\left\{\hat{\theta}\ :\ \mathrm{LMI}(\hat{\theta}, \Lambda_\Delta) \ \text{holds}\right\}\). 
Then, another convex problem is solved with \(\Lambda_\Delta\) as the decision variable and \(\hat{\theta}\) fixed to maximize the feasibility of \(\mathrm{LMI}(\hat{\theta}, \Lambda_\Delta)\) \cite[Remark 4]{yin2021backreach}.
As in Alg.~\ref{alg:train_lin_controller}, this guarantees exponential stability of the origin at all times during the training process.

\begin{algorithm}
\caption{Stabilizing RL algorithm for nonlinear plants}
    \begin{algorithmic}
    \State \(\Lambda_\Delta \gets I\) 
    \State \(\hat{\theta} \gets \text{random in }\ \Theta\)
    \While{not converged}
        \State \(\hat{\theta}^\prime \gets \text{gradient step from}\ \hat{\theta}\)
        \State \(\hat{\theta} \gets \arg\min_{\hat{\theta}} \|\hat{\theta}-\hat{\theta}^\prime\|_F \ \ \text{s.t.} \ \ \mathrm{LMI}(\hat{\theta}, \Lambda_\Delta)\)
        \State \(\Lambda_\Delta \gets \arg\max_{\Lambda_\Delta^\prime} 0 \ \  \text{s.t.} \  \mathrm{LMI}(\hat{\theta}, \Lambda_\Delta^\prime)\) \Comment{Re-center \(\Lambda_\Delta\)}
    \EndWhile
    \State \(\tilde{\theta} \gets f(\hat{\theta})\) \Comment{Recover \(\tilde{\theta}\)}
    \end{algorithmic}
    \label{alg:train-nonlin-controller}
\end{algorithm}

\section{Examples}

When the controller nonlinearity is slope restricted and passes through the origin, \(\tilde{\phi}(\tilde{C}_{K2}\xi(k) + \tilde{D}_{K3}z + \tilde{D}_{K4}y(k))\) is a contraction map with respect to \(z\),
as shown in the proof of Lemma \ref{lemma:well-posedness}. 
Therefore, we find the fixed point \(z(k)\) of this map using iteration methods \cite{bai2019deep}.
All code used can be found at \url{https://github.com/neelayjunnarkar/stabilizing-ren}.

\subsection{Training a controller for an inverted pendulum}

In this example we consider the following nonlinear inverted pendulum model:
\begin{align}
    & \bmat{x_1(k+1) \\ x_2(k+1)}  = \bmat{1 & \delta \\ \frac{g\delta}{\ell} & 1 - \frac{\delta\mu}{m\ell^2}}
        \bmat{x_1(k) \\ x_2(k)} \nonumber \\ 
    &\hspace{0.8cm} + \bmat{0 \\ -\frac{g\delta}{\ell}} \left(x_1(k) - \sin\left(x_1(k)\right)\right)
        + \bmat{0 \\ \frac{\delta}{m\ell^2}} u(k), \nonumber \\
    &\hspace{1.2cm}y(k)  = x_1(k), \label{eq:inv_pend_model}
\end{align}
where \(x_1(k)\) is the angular deviation from the pendulum being inverted, \(x_2(k)\) is the angular velocity of the pendulum, \(\delta = 0.02 \) s is the discretization sampling time, \(u(k)\) is the control input torque, \(m = 0.15\) kg is the mass, \(\ell = 0.5\) m is the pendulum length, and \(\mu = 0.5\) Nms$/$rad is the coefficient of friction.
We let \(\Delta(v) = v - \sin(v)\) be the nonlinearity.

We compute the reward for a rollout of length \(T\) as \(\sum_{k=0}^T \left(4 - u(k)^2\right)\), and train the controller \(\pi_{\tilde{\theta}}\) following the procedure in
Alg.~\ref{alg:train-nonlin-controller} using proximal policy optimization in the RLLib framework with PyTorch. The bias term in the reward function is to ensure that reward is nonnegative, and thus trajectories are not rewarded for ending early.
We enforce an exponential stability rate of \(\rho = 0.999\), set \(\phi = \tanh\), and use \(n_\xi = 2\) and \(n_\phi = 4\).
Initial conditions  \((x_1(0), x_2(0))\) are sampled from \([-0.3 \ \pi, 0.3 \ \pi] \ \text{rad} \times[-0.8, 0.8] \ \text{rad$/$s}\).
The plant nonlinearity \(\Delta(v)\) is globally slope restricted in \([0, 2]\). 
The maximum rollout length is 200 steps.
We also train a RNN controller from \cite{gu2021recurrent} with larger hidden size \(n_\phi = 8\) and using the global sector boun \([0, 1.213]\) for \(\Delta(v)\), but otherwise the same hyperparameters as for the REN controller.

Fig.~\ref{fig:ren-vs-rnn-old} compares reward versus number of samples taken from the plant model.
Due to the expressiveness of the additional parameter and the improved convexification procedure, the REN method we present in this paper achieves slightly better performance than the RNN controller from \cite{gu2021recurrent} while having a smaller model.
\begin{figure}[b]
    \centering
    \includegraphics[width=\linewidth,trim={0 0 0 0.9cm},clip]{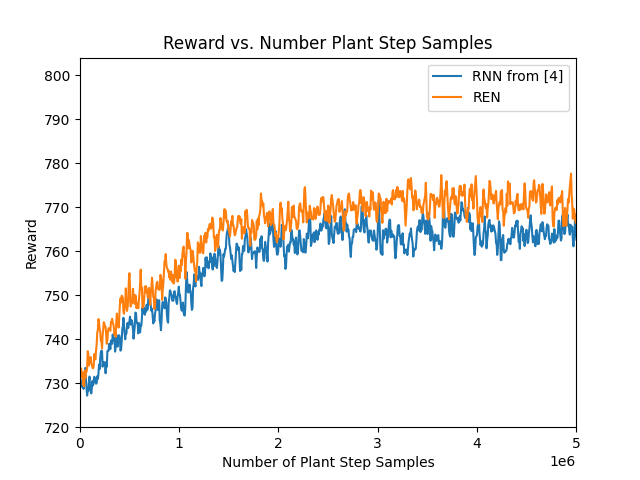}
    \caption{Reward during training of our REN-based controller with \(n_\xi = 2,\ n_\phi = 4\) and RNN controller from \cite{gu2021recurrent} with \(n_\xi = 2,\ n_\phi = 8\).
    The additional expressiveness of the REN-based controller and convexification procedure allows for slightly better performance with a smaller model.
    Note that stability is guaranteed at all points in training in both methods.
    }
    \label{fig:ren-vs-rnn-old}
\end{figure}

\subsection{Stabilizing a Neural Network Plant Model}

Neural networks have seen significant use in the identification of nonlinear systems.
The following demonstrates that the controller we present can stabilize a neural network model.
Consider a plant of the form
\begin{subequations} \label{eq:nn-plant-model}
\begin{align}
    x(k+1) & = F(x(k), u(k)) \\
    y(k) & = C_1 x(k)
\end{align}
\end{subequations}
where \(C_1\) is assumed to be known, and \(F\) is the output of the following implicit neural network:
\begin{subequations} \label{eq:nn-plant-nn}
\begin{align}
    F(x, u) & = Ax + B_1 q^\prime + B_2 u \\
    q^\prime & = \Delta(C_2 x + D_3 q^\prime). 
\end{align}
\end{subequations}
This implicit model encompasses a large variety of neural networks, including dense feedforward networks and convolutional neural networks \cite{elghaoui2020}. 
Expanding \(F(x(k), u(k))\) shows this plant model is of the desired plant form in \eqref{eq:plant-nonlin}.

We train a model of the form in \eqref{eq:nn-plant-nn} on the inverted pendulum with \(\Delta = \tanh\), \(q^\prime\) having size 2, the Adam optimizer, mean-square-error loss, and a learning rate of \(10^{-4}\). 
The training and test datasets are created by randomly sampling initial states in \([-\pi, \pi]\ \text{rad} \times [-8, 8]\ \text{rad$/$s}\), control inputs in \([-2, 2]\) Nm, and computing the true next state by simulating the true inverted pendulum model with the control.
After each gradient step, if \(\sigma = \|D_3\|_\Lambda \geq 1\), we set \(D_3 \gets \frac{\alpha}{\sigma}D_3\) with \(\alpha = 0.99\) to ensure that the implicit model remains well-posed.
The trained model achieves a mean-squared-error loss on the order of \(10^{-7}\). 

Next, we train a controller \(\pi_{\tilde{\theta}}\) similarly to the previous section, but assume that \(\Delta\) is incrementally sector bounded in \([0, 1]\) (a global incremental sector bound for \(\tanh\)), and set hyperparameters \(\rho = 0.9, n_\xi = 2\), and \(n_\phi = 8\).
We use the learned plant parameters in the projection step and the real plant model in computing trajectories. 
We also use a reward function that incentivizes stability since \(\pi_{\tilde{\theta}}\) is guaranteed to stabilize the learned plant model but not necessarily the true plant model.
The reward for a trajectory of length \(T\) is computed as \(\sum_{k=0}^T \left( 5 - x(k)^\top Q x(k) - u(k)^\top R u(k) \right)\) with \(Q = \mathrm{diag}(1, 0.1)\) and \(R = 0.01\).
Evaluation initial conditions are sampled in \([-0.6 \ \pi, 0.6 \ \pi] \ \text{rad} \times[-2, 2] \ \text{rad/s}\).
As before, the bias in the reward is experimentally determined to ensure reward is nonnegative, and thus trajectories are not rewarded for halting early. 

For comparison, we train another controller on the stability-incentivizing reward function, but with access to the true model in \eqref{eq:inv_pend_model} for the projection step. 
We assume the plant nonlinearity, \(\Delta(v) = v - \sin(v)\), is incrementally sector-bounded in \([0, 2]\), which is satisfied globally.
Fig.~\ref{fig:phase-portraits} plots phase portraits of both controllers, grid-sampling forty-nine initial conditions over \([-\pi, \pi] \times [-8, 8]\).
Both controllers achieve convergence to the origin from all tested points. 
Fig.~\ref{fig:learned-plant-vs-true-plant} compares the reward achieved by the two controllers.
The controller which guarantees stability of the learned, inexact model experimentally shows stability of the true model and achieves comparable performance to the controller which guarantees stability of the true model.

\begin{figure}[h]
    \centering
    \includegraphics[width=\linewidth]{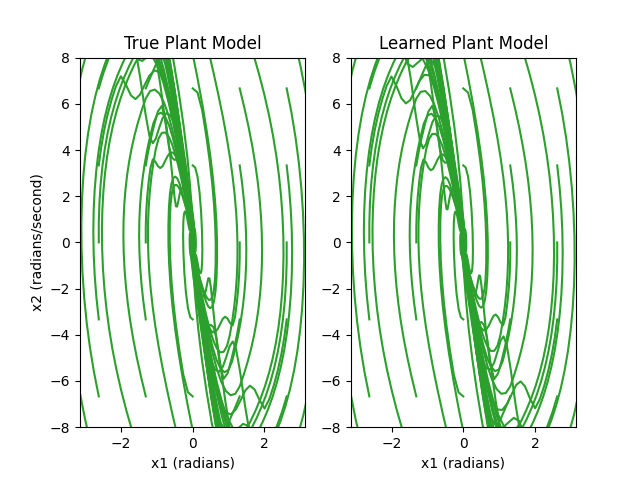}
    \caption{Phase portraits of the controller with stability guarantee of the true plant model vs the controller with stability guarantee of the learned plant model. Both controllers achieve convergence to the origin with all trajectories.
    }
    \label{fig:phase-portraits}
\end{figure}

\begin{figure}[h]
    \centering
    \includegraphics[width=\linewidth]{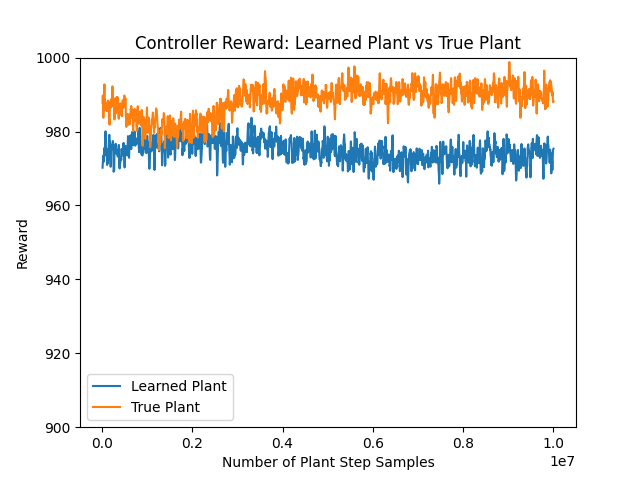}
    \caption{Reward of controller guaranteeing stability of learned plant model vs true plant model: The controller that only guarantees stability of the learned plant model achieves reward near that of the controller which has access to the true model parameters.
    }
    \label{fig:learned-plant-vs-true-plant}
\end{figure}

\section{Conclusion}

In this paper we presented a parameterization of dynamic controllers with sector bounded nonlinearities to guarantee exponential stability of the origin for partially-observed plants with sector bounded nonlinearities. 
This set of plants includes plants modeled by neural networks.
We demonstrated a controller synthesis method in a reinforcement learning context where an arbitrary gradient-based training algorithm can be used with the addition of a projection step that guarantees stability, both during training and of the trained controller.
The benefits of the expressiveness gained by this model and convexification method are demonstrated on a nonlinear inverted pendulum model.

\bibliographystyle{IEEEtran}
\bibliography{IEEEabrv,ref}

\end{document}